\newcommand{\C}{\mathbb{C}}
\newcommand{\R}{\mathbb{R}}
\newcommand{\E}{\vec{E}}
\def\mkfancyprefix#1#2{%
	\expandafter\def\csname fancyref#1labelprefix\endcsname{#1}%
	\begingroup\def\x{\endgroup\frefformat{plain}}%
	\expandafter\x\csname fancyref#1labelprefix\endcsname
	{\MakeLowercase{#2}\fancyrefdefaultspacing##1}%
	\begingroup\def\x{\endgroup\Frefformat{plain}}%
	\expandafter\x\csname fancyref#1labelprefix\endcsname
	{#2\fancyrefdefaultspacing##1}%
	\begingroup\def\x{\endgroup\frefformat{vario}}%
	\expandafter\x\csname fancyref#1labelprefix\endcsname
	{\MakeLowercase{#2}\fancyrefdefaultspacing##1##3}%
	\begingroup\def\x{\endgroup\Frefformat{vario}}%
	\expandafter\x\csname fancyref#1labelprefix\endcsname
	{#2\fancyrefdefaultspacing##1##3}%
}
\newcommand{\removelatexerror}{\let\@latex@error\@gobble}
\renewcommand{\vec}[1]{\ensuremath{\mathbf{#1}}}
\newcommand{\N}{\mathbb{N}}
\newcommand{\CGab}{\mathcal{C}_\mathrm{G}}
\newcommand{\K}{K}
\renewcommand{\L}{L}
\newcommand{\LK}{\L/\K}
\newcommand{\Lset}{\L[x;\theta]}
\newcommand{\Gal}[1]{\mathrm{Gal}\left(#1\right)}
\newcommand{\GalLK}{\Gal{\L/\K}}
\newcommand{\Aa}{\mathcal{A}}
\newcommand{\Bb}{\mathcal{B}}
\newcommand{\X}{\mathbf{X}}
\newcommand{\x}{\mathbf{x}}
\newcommand{\Xo}{\mathbf{X_0}}
\newcommand{\xo}{\mathbf{x_0}}
\renewcommand{\H}{\mathbf{H}}
\newcommand{\y}{\mathbf{y}}
\newcommand{\ytilde}{\mathbf{\tilde{y}}}
\newcommand{\Y}{\mathbf{Y}}
\newcommand{\MSP}[1]{\mathcal{A}_{#1}}
\newcommand{\LH}[1]{\langle #1 \rangle}
\newcommand{\rk}{\mathrm{rank}}
\newcommand{\ext}{\mathrm{ext}}
\newcommand{\smallsum}{{\textstyle\sum\nolimits}}
\newcommand{\Fnorm}[1]{\|#1\|_\mathrm{F}}
\begin{document}

\begin{verbatim}\end{verbatim}\vspace{2.5cm}

\begin{frontmatter}

\title{Low-Rank Matrix Recovery using Gabidulin Codes in Characteristic Zero}

\author{Sven M\"uelich\thanksref{ALL}\thanksref{sven.mueelich@uni-ulm.de}, Sven Puchinger, Martin Bossert}
\address{Institute of Communications Engineering\\ University of Ulm\\ Ulm, Germany}

\thanks[ALL]{This work has been supported by DFG, Germany, under grant BO 867/32-1} \thanks[sven.mueelich@uni-ulm.de]{Email:
   \href{mailto:sven.mueelich@uni-ulm.de} {\texttt{\normalshape
   {sven.mueelich,sven.puchinger,martin.bossert}@uni-ulm.de}}} 

\begin{abstract}
We present a new approach on low-rank matrix recovery (LRMR) based on Gabidulin Codes.
Since most applications of LRMR deal with matrices over infinite fields, we use the recently introduced generalization of Gabidulin codes to fields of characterstic zero. We show that LRMR can be reduced to decoding of Gabidulin codes and discuss which field extensions can be used in the code construction.
\end{abstract}

\begin{keyword}
Gabidulin Codes, Characteristic Zero, Low-Rank Matrix Recovery
\end{keyword}

\end{frontmatter}

\section{Introduction}
Low-rank matrices occur in many applications, e.g., in signal theory, machine learning and collaborative filtering. Unfortunately, in many cases it is only possible to get incomplete or indirect information of matrices. Since applications usually require complete matrices in order to process data, it is necessary to recover matrices from available data. In general this is not possible. However, when matrices are of low-rank, there are efficient algorithms to accomplish this task. So far, \emph{low-rank matrix recovery (LRMR)} is described by a minimization problem which can be solved by convex optimization programs. 
In this work we show how Gabidulin codes can be used in order to solve the LRMR problem. 

\section{Low-Rank Matrix Recovery}
	\label{sec:lrmr}
\emph{Low-Rank Matrix Recovery (LRMR)} was first defined in \cite{candes2009exact,candes2010power,recht2010guaranteed} and can be seen as matrix-analogue of compressed sensing \cite{gross2011recovering}.
The goal of LRMR is to reconstruct a matrix from incomplete or indirect observations. The problem is stated as follows: We want to recover an unknown matrix $\Xo \in \K^{m \times n}$ of lowest possible rank, where in applications $\K$ usually is the real or complex field. Therefore, we use observed measurements $\y = \Aa(\Xo)$, which we obtain by applying a linear measurement operator $\Aa:\R^{m \times n} \rightarrow \R^{p}$ to $\Xo$. We assume that $\Aa$ can be chosen arbitrarily. Finding a solution to this problem can be specified in terms of the minimization problem 
\begin{align}
	\min  \rk(\mathbf{\X}) ~\text{subject to }  \Aa(\X) =  \Aa(\Xo).
	\label{EqMinimization}
\end{align} 
Often, in literature (\ref{EqMinimization}) is also called rank minimization problem.
Since this problem is NP-hard, convex relaxations are considered. Most commonly used algorithms are nuclear norm minimization \cite{recht2010guaranteed} and iterative hard thresholding \cite{jain2010guaranteed}. 
An overview of these and other methods, theoretical guarantees and applications is given in \cite{davenport2016}.  
		
\section{Gabidulin Codes in Characteristic Zero}
\label{sec:gabcodes}
Gabidulin codes over finite fields were introduced in \cite{delsarte1978bilinear,gabidulin1985theory,roth1991maximum}, a comprehensive overview is given in \cite{wachter2013decoding}. Since we deal with numbers from infinite alphabets in LRMR, there is a need for Gabidulin codes in characteristic zero, which we introduce according to \cite{augot2013rank,robert2015phd,muelich2016alternative}.
Decoding in rank metric can be described by $\min  \rk(\mathbf{\E'}) ~\text{subject to }  \H\mathbf{\E'} =  \H\E$,
which has a similar form as Equation~\eqref{EqMinimization}. This observation suggests that it might be possible to use a rank metric decoder in order to recover a low-rank matrix and in Section~\ref{sec:newapproach} we will show how this can be done.
	
Let $\K\subseteq\L$ be fields and $\LK$ be a field extension of degree $m$. A codeword of a Gabidulin code can either be an $(m \times n)$-matrix over the ground field $K$ or a vector of length $n$ over $\L$. Let $\Bb = \{\beta_0,\dots,\beta_{m-1}\}$ be a basis of $\L$ over $\K$, $\boldsymbol{\beta}= \{\beta_0,\dots,\beta_{m-1}\}$ an order of the basis and  $\x \in \L$. We use the following bijective mapping in order to switch between the two representations:
\begin{definition}
The $\ext$-mapping is a bijective function $\ext_{\beta}:\L^n \rightarrow \K^{m\times n}$ related to an ordered basis $\boldsymbol{\beta}$ of $\L$ which fulfills the equation 
\begin{align}
\label{eq:ext}
\x = \beta\cdot \ext_{\beta}(\x) = \beta \cdot \X. 
\end{align}
Equation~(\ref{eq:ext}) contains the inverse of the mapping. The $\boldsymbol{\beta}$ in the notation of the mapping $\ext_{\beta}(\x)$ is omitted if it is not important to the problem.
\end{definition}
	
Gabidulin codes in characteristic zero are defined using $\theta$-polynomials which are a generalization of linearized polynomials known from the definition of Gabidulin codes over finite fields.
\begin{definition}
Let $\K\subseteq\L$ be fields and $\LK$ be a Galois extension. The Galois group of $\LK$ is $\GalLK = \{ \theta: \L \rightarrow \L$ automorphism $|~\theta(k) = k~\forall k \in \K\}$.	 
 The set of $\theta$-polynomials is defined as 
\begin{align*}
	\Lset = \left\{a = \smallsum_{i=0}^{d_a} a_i x^i : a_i \in \L, \, d_a \in \N, \, a_{d_a} \neq 0 \right\}.
\end{align*}
\end{definition}
$\Lset$ with ordinary addition and multiplication $x\cdot\alpha = \theta(\alpha)$  $\forall \alpha \in \L$ extended to polynomials inductively, is a ring, the so-called $\theta$-polynomial ring. The degree of $a \in \Lset$ is given by $\deg a = d_a$ and
$a$ is called \emph{monic} if $a_{d_a} = 1$.
Let $\K \subseteq \L$ be fields and $\L/\K$ be a Galois extension of degree $m$. We denote the number of linearly independent columns over $\K$ is by $\rk_\K$. We define
	\arraycolsep=0pt
\begin{align*}
	\mathbf{X_{\theta}} = 
	\begin{pmatrix}
		x_{1} & \dots & x_{n} \\
		\vdots & \ddots & \vdots \\
		\theta^{s-1}(x_{1}) & \dots &  \theta^{s-1}(x_{n})
	\end{pmatrix} \text{and~}
	\mathbf{X_{\Bb}} = 
	\begin{pmatrix}
		x_{1,1} & \dots & x_{n,1} \\
		\vdots & \ddots & \vdots \\
		x_{1,m} & \dots &  x_{n,m}
	\end{pmatrix}.
\end{align*}

\cite[Section~2.2]{robert2015phd} gives four definitions of rank weight in characteristic zero. 
\begin{definition}\label{def:rank_metrics}
	Let $\mathbf{x} \in \L^n$. The following are rank weights:
	\begin{align*}
		\omega_{\Aa}(\mathbf{x}) = deg(\MSP{\LH{x_1,\dots,x_n}}),~~
		\omega_{\theta,\L}(\mathbf{x}) = \rk_\L(\mathbf{X_{\theta}}),\\
		\omega_{\theta,\K}(\mathbf{x}) = \rk_\K(\mathbf{X_{\theta}}),~~
		\omega_{\Bb}(\mathbf{x}) = \rk_\K(\mathbf{X_{\Bb}}),
	\end{align*}
where $\MSP{\LH{x_1,\dots,x_n}}$ is the Minimal Subspace Polynomial as defined in \cite{robert2015phd}. The corresponding rank metrics can be defined as 
$\mathrm{d}_\mathrm{R}(\mathbf{x},\mathbf{y}) = \omega_i (\mathbf{x}-\mathbf{y})$.
\end{definition}
	
In the finite field case, these rank weights are the same.
In {\cite[Lemmata~13, 14, and 15]{robert2015phd}}, the relation $\omega_{\Aa}(\mathbf{x}) = \omega_{\theta,\L}(\mathbf{x}) \leq \omega_{\theta,\K}(\mathbf{x}) = \omega_{\Bb}(\mathbf{x})$ has been proven over characteristic zero.
	
\begin{definition}
Let $g_1,\dots,g_n \in \L$ be linearly independent over $\K$. Then a Gabidulin code with parameters $n$ and $k \leq n$ is defined as
\begin{align*}
	\CGab[n,k] = \left\{ (f(g_1), \dots, f(g_n)) \, : \, f \in \Lset \, \land \, \deg f < k  \right\}.
\end{align*}
\end{definition}
For an overview of properties we refer to \cite{robert2015phd}.
For decoding Gabidulin codes in characteristic zero, \cite{augot2013rank} gives a generalization of a Welch-Berlekamp-like algorithm which allows decoding in $O(n^3)$. In \cite{muelich2016alternative} we derived a Gao-like key equation and hence reduced the decoding complexity to $O(n^2)$.
	
\section{New Approach for Low-Rank Matrix Recovery}
\label{sec:newapproach}
In this section we reduce the problem of LRMR to decoding of Gabidulin codes and show how decoding can be used in order to recover low-rank matrices. 
Let $\H \in \L^{(n-k)\times n}$ be the parity-check matrix of a Gabidulin code.
Recalling Equation~(\ref{EqMinimization}), we interpret the unknown matrix $\Xo \in \K^{m \times n}$ as the low-rank error $\E$ which usually arises while transmitting a codeword of a Gabidulin code. We define the linear measurement operator $\Aa:\K^{m \times n} \rightarrow \K^{p}$ as in Algorithm~\ref{alg:measurementop}.

\begin{alg}
	Linear measurement operator \\
	Input: $\Xo \in \K^{m \times n}$ \\
	Output: $y \in \K^p$ \\
	1. $\xo \gets \ext^{-1}(\Xo)$  ~~~~~~~~~~~~~~~~~~~~~~~~~~~~~// {$\xo \in L^{n}$}   \label{line:lmo_1} \\
	2. $\ytilde \gets \H \cdot \xo^{T}$ ~~~~~~~~~~~~~~~~~~~~~~~~~~~~~~~~~// {$\ytilde \in \L^{n-k}$}  \label{line:lmo_2} \\ 
	3. $\Y \gets \ext(\ytilde)$ ~~~~~~~~~~~~~~~~~~~~~~~~~~~~~~~~~// {$\Y \in \K^{(n-k)\times n}$} \label{line:lmo_3}\\
	4. $\y \gets$ vector representation of $\Y$ ~~~~~~~~~// {$\y \in \K^{n\cdot (n-k)}$} \label{line:lmo_4}\\
	5. Return $\y$
	\label{alg:measurementop}
\end{alg}

Note, that all operations used in the algorithm are $\K$-linear. 

\begin{theorem}
If $p = n \cdot (n-k)$, there exists a mapping $\Aa$ as in Algorithm~\ref{alg:measurementop} such that $\Xo$ with $\rk(\Xo) \leq \frac{n-k}{2} = \frac{p}{2n}$ can be recovered.
\end{theorem}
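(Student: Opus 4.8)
The strategy is to recognize Algorithm~\ref{alg:measurementop} as nothing more than syndrome computation for a Gabidulin code, transported back and forth through the $\ext$-mapping, and then to invoke an existing rank-metric decoder. First I would set up the correspondence explicitly: given $\Xo \in \K^{m\times n}$, line~1 produces $\xo = \ext^{-1}(\Xo) \in \L^n$, and by the chain of inequalities $\omega_{\Aa}(\xo) = \omega_{\theta,\L}(\xo) \le \omega_{\theta,\K}(\xo) = \omega_{\Bb}(\xo) = \rk_\K(\mathbf{X_{\Bb}}) = \rk(\Xo)$ quoted from \cite[Lemmata~13--15]{robert2015phd}, the rank weight of $\xo$ in the metric $\omega_{\Aa}$ (the one under which Gabidulin codes are maximum rank distance) is at most $\rk(\Xo)$. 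So the hypothesis $\rk(\Xo) \le \tfrac{n-k}{2}$ guarantees $\omega_{\Aa}(\xo) \le \tfrac{n-k}{2}$, i.e.\ $\xo$ lies within the unique-decoding radius of $\CGab[n,k]$, whose minimum rank distance is $n-k+1$.

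Next I would argue that $\y$ determines, and is determined by, the syndrome $\ytilde = \H\xo^T \in \L^{n-k}$. Lines~3 and~4 apply $\ext$ (a bijection $\L^{n-k}\to\K^{(n-k)\times n}$ when restricted appropriately) followed by a trivial vectorization $\K^{(n-k)\times n}\to\K^{n(n-k)}$; since $p = n(n-k)$ these are both invertible $\K$-linear maps, so from $\y$ one recovers $\ytilde$ and hence the syndrome of $\xo$ with respect to $\H$. The recovery procedure is then: from the measurement $\y = \Aa(\Xo)$, reconstruct $\ytilde$, treat it as the syndrome of an unknown rank-$\le\lfloor(n-k)/2\rfloor$ error, and run the syndrome-based Gabidulin decoder of \cite{augot2013rank} (or the $O(n^2)$ variant of \cite{muelich2016alternative}) to obtain the unique $\xo'$ with $\omega_{\Aa}(\xo')\le\tfrac{n-k}{2}$ and $\H\xo'^T = \ytilde$; finally apply $\ext$ to get $\Xo' = \ext(\xo')$. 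Uniqueness of the bounded-weight syndrome preimage (a standard property of MRD codes, valid here because the characteristic-zero Gabidulin code still has minimum rank distance $n-k+1$) forces $\Xo' = \Xo$, and since $\rk(\Xo) \le \tfrac{n-k}{2}$ this $\Xo$ is also the minimizer in \eqref{EqMinimization} among all matrices consistent with $\y$ — any competing $\X$ with $\Aa(\X)=\y$ and smaller or equal rank would yield a second bounded-weight syndrome preimage, contradicting uniqueness.

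The one point that needs genuine care, rather than bookkeeping, is the passage from $\rk(\Xo)$ to a rank weight under which the decoder actually corrects errors: the four weights of Definition~\ref{def:rank_metrics} coincide over finite fields but only satisfy $\omega_{\Aa} = \omega_{\theta,\L} \le \omega_{\theta,\K} = \omega_{\Bb}$ in characteristic zero. Because $\omega_{\Bb}(\xo) = \rk(\Xo)$ sits on the \emph{large} side of this inequality, the bound $\rk(\Xo)\le\tfrac{n-k}{2}$ still implies $\omega_{\Aa}(\xo)\le\tfrac{n-k}{2}$, so the decoder's correction radius (measured in $\omega_{\Aa}$) is not exceeded — the inequality happens to point the right way. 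I would make this explicit and also note that one must choose $\H$ so that the code $\CGab[n,k]$ it defines has full designed distance, which holds as soon as the code locators $g_1,\dots,g_n$ are $\K$-linearly independent; this is exactly where the freedom ``there exists a mapping $\Aa$'' is used. The remaining steps — linearity of $\Aa$, bijectivity of the $\ext$ and vectorization maps, and the MRD uniqueness argument — are routine and would be stated without detailed computation.
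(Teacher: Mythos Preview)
Your approach is correct and matches the paper's own proof: invert the bijective $\ext$ and vectorization steps to recover the syndrome $\ytilde$ from the measurement $\y$, then apply a bounded-minimum-distance Gabidulin decoder to obtain $\xo$ and hence $\Xo = \ext(\xo)$. Your treatment is in fact more careful than the paper's brief argument, which does not explicitly address the characteristic-zero subtlety you highlight---namely that $\rk(\Xo) = \omega_{\Bb}(\xo) \ge \omega_{\Aa}(\xo)$, so the hypothesis $\rk(\Xo)\le\tfrac{n-k}{2}$ indeed places $\xo$ within the decoder's correction radius measured in $\omega_{\Aa}$.
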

	
\begin{proof}
We use the output of the linear measurement operator to calculate $\ytilde~=~\ext^{-1}(y)~\in~\L^{n-k}$. This corresponds to the syndrome which is used by a decoder in order to produce the error matrix $\E$ which is in our case the matrix $\Xo$. We know from coding theory, that if $\rk(\Xo) \leq \frac{d-1}{2} = \frac{n-k}{2} = \frac{p}{2n}$ the matrix $\Xo$ (and consequently $\E$) is unique and can be found by a bounded minimum distance decoder \cite{muelich2016alternative}.
\end{proof}
		
Using our result from \cite{muelich2016alternative}, decoding can be done in $O(n^2)$. 
In comparison, the complexity of previous methods depends on singular value decomposition.
	
\section{Which Fields Can Be Used?}
\label{sec:fields}
Applications usually use real or complex matrices. Using Gabidulin codes for solving LRMR, we cannot use $\K = \C$, since $\C$ is algebraically closed. It is well-known that the only finite extensions of $\R$ are $\C$ or $\R$ and hence $\K = \R$ is also not suitable. Thus, we need another field $K$ which approximates the actual matrix. We can find such a $\K$, e.g., by minimizing the Frobenius norm of the difference matrix: Assuming that $\X \in \C^{m \times n}$ is the actual matrix, we would like to choose a field $K$ such that we can find another matrix $\X' \in K^{m \times n}$ with
\begin{align*}
	\Fnorm{\X'-\X} = \sqrt{\sum\limits_{i,j} |x'_{ij} - x_{ij}|^2} < \epsilon
\end{align*}
for any $\epsilon>0$. This condition is always fulfilled if $K$ is a \emph{dense} subfield of $\C$ and helps to minimize the ``approximation error'' when going from $\C$ to $K$.
	
If we want to recover a matrix over real numbers, we propose to use cyclotomic extensions. 
Therefore, we take the rational numbers and adjoint the $n$-th roots of unity. Hence, we get a field extension of degree $\varphi(n)$, where $\phi$ is Euler's phi function. We are able to construct a cyclotomic extension of degree $n$ iff $n \in \textrm{image}(\varphi)$. Since for every prime $p$ it is $p-1 \in \textrm{image}(\varphi)$, the density of $\textrm{image}(\varphi)$ is at least $\frac{n}{\textrm{ln}(n)}$ (cmp. prime number theorem). Hence, $\textrm{image}(\varphi)$ is sufficiently dense to find suitable numbers. The automorphism group of such a field fulfills the required properties, namely the Galois group is cyclic and the characteristic polynomial of an automorphism $\theta \in Gal$ is square-free. Examples can be found in {\cite[Section~2.3.3]{robert2015phd}}.
	
If the desired matrix is in $\mathbb{C}$, we can use Kummer extensions {\cite[Section~2.3.3]{robert2015phd}}. We need to choose $\K$ to be a dense subfield of $\mathbb{C}$ which has $n$ many $n$-th roots of unity. We can ensure this by choosing $\K=\mathbb{Q}(\zeta_n)$, where $\zeta_n = \mathrm{e}^{\mathrm{i} \frac{2 \pi}{n}}$, i.e., $\K$ is a cyclotomic extension of $\mathbb{Q}$ of degree $\varphi(n)$. Then, $\K \subseteq \mathbb{C}$ dense and it contains $n$ distinct $n$-th roots of unity (namely $\zeta_n^i$ for $i=0,\dots,n-1$).
Thus, there is a Kummer extension $\L/\K$ of degree $n$, e.g., by adjoining an $n$-th root of $2$ to $\K$.
This approach works for any $n\in \mathbb{N}$ with $n>2$.

\end{document}